\documentclass[journal]{new-aiaa}

\usepackage{amsmath,amssymb,amsthm,mathtools,bm}
\usepackage{booktabs,tabularx,array,multirow}
\usepackage{graphicx}
\usepackage{placeins}
\usepackage{xcolor}

\hypersetup{
  hidelinks,
  pdftitle={Compositional Aeroelastic Operators for Morphing Flexible Multibody Aircraft: A Geometric Framework with Structural Verification},
  pdfauthor={Gelin Chen, Chen Song, and Chao Yang},
  pdfsubject={AIAA-style journal manuscript on compositional morphing-aircraft aeroelasticity},
  pdfkeywords={morphing aircraft, geometrically exact beam, Euler-Poincare equations, aeroelasticity, material attachment, model reduction}
}

\graphicspath{{figures/}}

\newcommand{\SE}{\mathrm{SE}(3)}

\newcommand{\Exp}{\operatorname{Exp}}
\newcommand{\Log}{\operatorname{Log}}
\newcommand{\Ad}{\operatorname{Ad}}
\newcommand{\ad}{\operatorname{ad}}
\newcommand{\dexp}{\operatorname{dexp}}
\newcommand{\BCH}{\operatorname{BCH}}

\newcommand{\diag}{\operatorname{diag}}
\newcommand{\R}{\mathbb{R}}
\newcommand{\PiN}[1]{\Pi_{\leq #1}}
\newcommand{\dd}{\,\mathrm{d}}
\newcommand{\T}{^{\mathsf T}}
\newcommand{\norm}[1]{\left\lVert #1\right\rVert}
\newcommand{\pair}[2]{\left\langle #1,#2\right\rangle}
\newcommand{\SI}[2]{\ensuremath{#1\,\mathrm{#2}}}

\newtheorem{proposition}{Proposition}

\title{Compositional Aeroelastic Operators for Morphing Flexible Multibody Aircraft:\\
A Geometric Framework with Structural Verification}

\author[1]{Gelin Chen\thanks{Doctoral Candidate, School of Aeronautic Science and Engineering; sy2405230@buaa.edu.cn.}}
\author[1]{Chen Song\thanks{Associate Professor, School of Aeronautic Science and Engineering; songchen@buaa.edu.cn.}}
\author[1]{Chao Yang\thanks{Associate Professor, School of Aeronautic Science and Engineering; yangchao@buaa.edu.cn.}}
\affil[1]{School of Aeronautic Science and Engineering, Beihang University, Beijing, People's Republic of China}

\begin{document}

\maketitle

\begin{abstract}
Morphing flexible multibody aircraft require structural strain, aerodynamic geometry, surface velocity, and generalized loading to remain compatible while joints and flexible components change configuration. A compositional formulation is developed around an assumed material attachment between each lifting surface and a geometrically exact beam. The component root pose is separated from the section field. Consequently, the body strain $G^{-1}G_{,s}$ and the elastic potential of a component depend on that component's elastic coordinates, while upstream motion enters its kinetic terms and external-load pullbacks. At element level, an exact relative logarithm $d$ supplies strain and potential energy, whereas a reference-anchored section coordinate $\sigma$ supplies the deformed section geometry. Their finite-order expansions retain the finite reference geometry exactly and truncate only endpoint perturbations. A fixed attachment map then generates surface points, tangents, normals, velocities, and force Jacobians from the same section kinematics. Euler--Poincar\'e beam balance, boundary ports, moving-surface potential-flow relations, graph cotangent assembly, and the associated semidiscrete power identity are stated in a common twist--wrench convention. Collocation, pressure evaluation, equivalent load application, and structural station projection are defined separately, exposing their distinct approximation errors. Verification shows the expected $N_d+1$ convergence order for degree-$N_d$ relative-log expansions. A geometrically nonlinear cantilever comparison further shows that a cubic static-manifold correction reduces mean full-record displacement error from $0.479$ to $0.255$ over four completed load cases. The reported evidence is structural and interface level; it does not constitute validation of a complete aircraft aeroelastic prediction. It supports the local geometric, algebraic, and virtual-work consistency of the construction and provides a traceable basis for subsequent coupled validation.
\end{abstract}

\section*{Nomenclature}

\noindent\begin{tabularx}{\textwidth}{@{}>{\raggedright\arraybackslash}p{0.19\textwidth}X@{}}
$H_{AB}$ & pose mapping coordinates from frame $B$ to frame $A$ \\
$B_i$ & root pose of flexible component $i$ \\
$G_i(s,t)$ & material-section pose of component $i$ \\
$G_i^0(s)$ & reference material-section pose \\
$s$ & material coordinate along a beam component \\
$q_i$ & retained elastic generalized coordinates of component $i$ \\
$x_i$ & full constraint-compatible structural coordinates before condensation \\
$\xi,\kappa$ & body twist and body strain, $(G^{-1}\dot G)^\vee$ and $(G^{-1}G_{,s})^\vee$ \\
$\epsilon$ & elastic strain measure $\kappa-\kappa^0$ \\
$\mu,n$ & sectional momentum and stress-resultant covectors \\
$d_0,d,\delta d$ & reference relative log, current relative log, and $d-d_0$ \\
$\sigma$ & section coordinate relative to the reference interpolation \\
$\Phi,\Psi$ & retained and condensed structural bases \\
$h_n$ & degree-$n$ term of the static condensation manifold \\
$(u,v)$ & oriented aerodynamic surface parameters; leading edge to trailing edge and root to tip \\
$\psi:(u,v)\mapsto(s,\upsilon)$ & locally invertible material attachment-coordinate map on a surface patch \\
$\rho(s,\upsilon)$ & surface offset expressed in the attached reference section frame \\
$X,n_s,A_s$ & current surface point, oriented unit normal, and surface area measure \\
$X_O,X_P,X_L,X_{\widetilde L}$ & operator, pressure, equivalent-load, and structural station-proxy sites \\
$\Gamma,x_w$ & bound-circulation and wake-state vectors \\
$\pi$ & pressure jump divided by density, $\Delta p/\rho_f$ \\
$b_a,y$ & aerodynamic boundary-condition input and selected generalized-force output \\
$\Pi_{\leq N}$ & projection onto multivariate polynomials of total degree at most $N$ \\
$\Ad_H,\ad_\xi$ & group and algebra adjoint operators \\
$\pair{w}{V}=w\T V$ & wrench--twist duality pairing \\
\end{tabularx}

\section{Introduction}

Morphing aircraft embed a changing geometry and load path inside the aeroelastic problem. Joint rotations and compliant shape changes alter the placement of lifting surfaces, the velocity entering a no-penetration condition, and the virtual directions through which pressure performs work. Reviews of morphing concepts identify this interaction among geometry, actuation, loads, and stability as a persistent design difficulty \cite{Sofla2010,Barbarino2011}. Recent aeroelastic studies have treated compliant span morphing and morphing flying wings with increasing structural and aerodynamic fidelity \cite{Ajaj2018,Syed2022}. Their breadth reinforces a basic requirement: a useful reduced model must retain the relation between structural material motion and the aerodynamic surface throughout a configuration change.

Geometrically exact beams provide invariant strain measures for large motion \cite{Simo1985,SimoVuQuoc1986,Hodges1990}. Euler--Poincar\'e reduction supplies compatible velocity, variation, momentum, and boundary-port equations on Lie groups \cite{Holm1998,Muller2018}. Unsteady vortex-lattice methods provide economical potential-flow models for flexible-aircraft dynamics \cite{Murua2012,KatzPlotkin2001}, while wake-state realization and balancing can reduce their dynamic order \cite{Maraniello2019}. Nonlinear flexible-aircraft formulations demonstrate how these ingredients can coexist in a single simulation framework \cite{Palacios2010,Hesse2014}. The remaining issue addressed here is the organization of their geometric interfaces.

General fluid--structure transfer methods emphasize work and momentum conservation on nonmatching meshes \cite{Farhat1998}. Radial-basis-function methods also offer effective and flexible interpolation for both displacement transfer and mesh motion \cite{Rendall2008}. The present work considers a more specialized hypothesis: each aerodynamic material point is assigned to a beam material section and to an offset expressed in that section's reference frame. Under this hypothesis, the surface point, tangents, oriented normal, point velocity, and generalized force map can all be differentiated from one attachment relation. This common source provides geometric consistency within the assumed attachment model. Comparative accuracy with other transfer families depends on the structural idealization, surface resolution, deformation class, and calibration data, and is therefore left to dedicated studies.

The main contributions are as follows. First, the formulation distinguishes the element relative logarithm $d$, which generates structural strain energy, from the section coordinate $\sigma$, which generates internal section geometry. Both are expanded about a finite reference geometry without counting that reference as a small elastic quantity. Second, the component factorization $G_i=B_i\overline G_i$ is used to show that the strain energy of component $i$ is $U_i(q_i)$ when $B_i$ is independent of $s$. Upstream coordinates still affect inertia, external work, and graph ports. Third, an assumed attachment map generates the complete aerodynamic geometry and its tangent and cotangent actions. Fourth, the Euler--Poincar\'e field equation, aerodynamic boundary and pressure equations, and multibody graph assembly are written using one twist--wrench pairing. Finally, exact-reference convergence data and a nonlinear cantilever comparison provide structural verification, including a direct comparison of quadratic and cubic static-manifold corrections.

The paper is organized around the theoretical dependency chain. Section~\ref{sec:setting} establishes the geometric hypotheses and graph convention. Section~\ref{sec:element} develops $d$ and $\sigma$. Section~\ref{sec:structure} states the Euler--Poincar\'e equation, component-local potential, and nonlinear static condensation. Sections~\ref{sec:attachment} and \ref{sec:aero} derive the surface geometry and aerodynamic one-forms. Section~\ref{sec:graph} gives cotangent graph assembly and output-aware wake reduction. Section~\ref{sec:evidence} reports verification evidence. Error sources and scope are discussed in Section~\ref{sec:scope}; derivations and a three-segment specialization are collected in the appendices.

\section{Geometric Setting and Compositional Hypotheses}\label{sec:setting}

\subsection{Twists, wrenches, and the component graph}

Let $H_{AB}\in\SE$ map coordinates expressed in frame $B$ to frame $A$. Twists and wrenches are ordered as
\begin{equation}
 V=\begin{bmatrix}v\\\omega\end{bmatrix},\qquad
 w=\begin{bmatrix}f\\m\end{bmatrix},\qquad
 \pair{w}{V}=w\T V .
 \label{eq:pairing}
\end{equation}
Coordinate changes obey $V_A=\Ad_{H_{AB}}V_B$ and $w_B=\Ad_{H_{AB}}\T w_A$. Thus, every forward tangent map has a uniquely paired cotangent map given by its transpose under Eq.~\eqref{eq:pairing}.

The aircraft is represented by a directed acyclic component graph. Flexible component $i$ has a parent port pose $P_i$, a joint pose $C_i(\theta_i)$, and a fixed installation pose $H_i^{\mathrm{inst}}$. Its root pose is
\begin{equation}
 B_i=P_i C_i(\theta_i)H_i^{\mathrm{inst}}.
 \label{eq:root_pose}
\end{equation}
If $J_i(\theta_i)$ is defined by $C_i^{-1}\dot C_i=(J_i\dot\theta_i)^\wedge$, the root twist has the recursion
\begin{equation}
 V_{B_i}=\Ad_{(H_i^{\mathrm{inst}})^{-1}C_i^{-1}}V_{P_i}
 +\Ad_{(H_i^{\mathrm{inst}})^{-1}}J_i\dot\theta_i .
 \label{eq:root_twist}
\end{equation}
The same linear maps act on admissible virtual twists. Their transposes therefore give the compatible backward wrench recursion; no separate moment-transfer convention is required.

\subsection{Structural and attachment hypotheses}

The material-section pose of component $i$ is factorized as
\begin{equation}
 G_i(s,t)=B_i(t)\overline G_i(s,q_i(t)),\qquad s\in[0,L_i],
 \label{eq:component_factor}
\end{equation}
where $B_i$ has no dependence on the material coordinate $s$. This factorization permits arbitrary upstream rigid and joint motion while isolating the component's elastic section field.

The aerodynamic surface carried by component $i$ is supplied in a reference component frame as $X_{i0}^{C}(u,v)$. On a surface patch $\Omega_i$, let
\begin{equation}
 \psi_i:\Omega_i\longrightarrow\widehat\Omega_i,qquad
 (u,v)\longmapsto\bigl(s_i(u,v),\upsilon_i(u,v)\bigr),
 \label{eq:attachment_chart}
\end{equation}
be a $C^1$ local diffeomorphism. Its material attachment is described by
\begin{equation}
 \mathcal A_i:(u,v)\longmapsto
 \bigl(s_i(u,v),\upsilon_i(u,v),
 \rho_i(s_i,\upsilon_i)\bigr),
 \label{eq:attachment_map}
\end{equation}
where $u$ increases from leading edge to trailing edge, $v$ increases from root to tip, and $\rho_i$ is an offset expressed in the attached section frame. If a global one-to-one chart is unavailable, the surface is covered by patches with compatible transition data. This condition makes $\rho_i(s,\upsilon)$ single valued on each patch; equivalently, one may store the composed field $\rho_i\circ\psi_i$ directly on $(u,v)$. The map is fixed in time for the principal development. Time-dependent or sliding attachment is discussed in Section~\ref{sec:scope}.

Three hypotheses delimit the theory:
\begin{enumerate}
 \item each flexible component admits Eq.~\eqref{eq:component_factor}, with its elastic constitutive data expressed in material coordinates;
 \item the attachment chart in Eq.~\eqref{eq:attachment_chart} is locally invertible on each surface patch, $\rho_i$ is single valued on its image, and the oriented surface parameterization is nondegenerate;
 \item local physical loads are represented as one-forms and are assembled through cotangent maps paired with the declared kinematics.
\end{enumerate}
These hypotheses describe an abstract relationship between structure and aerodynamics. Beam discretization, aerodynamic paneling, and state reduction enter afterward.

\subsection{Dependency chain}

For each structural element, endpoint coordinates determine the relative log $d$ and the section field $\sigma$. The structural branch of the chain is
\begin{equation}
 (q_a,q_b)\longrightarrow d\longrightarrow \epsilon\longrightarrow U_e,
 \label{eq:d_chain}
\end{equation}
whereas the geometric branch is
\begin{equation}
 (q_a,q_b)\longrightarrow \sigma\longrightarrow G(s)
 \longrightarrow \{X,X_{,u},X_{,v},n_s,\dot X,J_X\}.
 \label{eq:sigma_chain}
\end{equation}
The distinction in Eqs.~\eqref{eq:d_chain} and \eqref{eq:sigma_chain} prevents a section interpolation coordinate from silently becoming a strain measure. Their shared endpoint source also ensures that structural energy and surface geometry refer to the same deformed element.

\section{Element-Local Lie-Group Coordinates}\label{sec:element}

\subsection{Exact relative log and elastic potential}

Consider an element with reference endpoint poses $G_a^0$ and $G_b^0$. Endpoint perturbations $q_a,q_b\in\R^6$ define
\begin{equation}
 G_a=G_a^0\Exp(q_a^\wedge),\qquad
 G_b=G_b^0\Exp(q_b^\wedge),
 \label{eq:endpoint_poses}
\end{equation}
and the finite reference relative log is
\begin{equation}
 d_0=\Log\!\left((G_a^0)^{-1}G_b^0\right)^\vee .
 \label{eq:d0}
\end{equation}
The current relative log follows exactly from the group product,
\begin{equation}
 d(q_a,q_b)=
 \Log\!\left[\Exp(-q_a^\wedge)\Exp(d_0^\wedge)\Exp(q_b^\wedge)\right]^\vee,
 \qquad \delta d=d-d_0 .
 \label{eq:d_exact}
\end{equation}
For the exponential interpolation used below, the left-trivialized strain is constant on the element. Let
\begin{equation}
 \overline{\mathcal K}_e=\frac{1}{L_e}
 \int_{s_a}^{s_b}\mathcal K(s)\dd s
 \label{eq:average_section_stiffness}
\end{equation}
denote the element-average sectional constitutive matrix. The corresponding strain increment and potential are
\begin{equation}
 \epsilon_e=\frac{\delta d}{L_e},\qquad
 U_e=\frac{1}{2L_e}\delta d\T
 \overline{\mathcal K}_e\delta d,
 \label{eq:element_potential}
\end{equation}
which follows from $U_e=\tfrac12\int\epsilon_e\T\mathcal K(s)\epsilon_e\dd s$. Thus $\overline{\mathcal K}_e$ is an averaged sectional stiffness, not an already assembled element stiffness. For nonconstant strain interpolation or quadrature, the local field $\kappa(s)-\kappa^0(s)$ replaces $\delta d/L_e$ inside the integral. The role of $d$ is then to generate the compatible relative deformation from which that field is reconstructed.

The endpoint-relative chart
\begin{equation}
 A_0=\Ad_{\Exp(-d_0^\wedge)},\qquad q_b=A_0q_a+r
 \label{eq:endpoint_relative_chart}
\end{equation}
separates an approximately common endpoint perturbation from the relative increment $r$. It is useful for local finite-order construction because both $q_a$ and $r$ vanish at the element reference state.

\subsection{Section interpolation and the role of \texorpdfstring{$\sigma$}{sigma}}

Let $\lambda=(s-s_a)/L_e\in[0,1]$. The reference and current section interpolations are
\begin{align}
 G_e^0(\lambda)&=G_a^0\Exp(\lambda d_0^\wedge),
 \label{eq:reference_interp}\\
 G_e(\lambda)&=G_a^0\Exp(q_a^\wedge)\Exp(\lambda d^\wedge).
 \label{eq:current_interp}
\end{align}
The section coordinate relative to Eq.~\eqref{eq:reference_interp} is therefore
\begin{equation}
 \sigma(\lambda,q_a,r)=
 \Log\!\left[\Exp(-\lambda d_0^\wedge)
 \Exp(q_a^\wedge)\Exp(\lambda d^\wedge)\right]^\vee .
 \label{eq:sigma_exact}
\end{equation}
It satisfies $\sigma(0)=q_a$ and $\sigma(1)=q_b$ in the exact chart. The section pose may equivalently be written $G_e=G_e^0\Exp(\sigma^\wedge)$. Differentiation gives the operators used in the kinematic branch,
\begin{equation}
 E_\sigma=\Ad_{\Exp(-\sigma^\wedge)},\qquad
 D_\sigma=\dexp_{-\sigma},\qquad
 Y_\sigma=D_\sigma\sigma_{,q}.
 \label{eq:sigma_operators}
\end{equation}
The derivative $\sigma_{,s}=L_e^{-1}\sigma_{,\lambda}$ supplies section interpolation derivatives. The element potential in Eq.~\eqref{eq:element_potential} remains a function of $d$; $\sigma$ supplies section pose, velocity, virtual motion, and attached geometry.

\subsection{Reference-anchored finite-order expansions}

Ordinary Taylor--BCH truncation can inadvertently count the finite reference quantity $d_0$ as an elastic perturbation. A reference-anchored construction avoids that interpretation. For Eq.~\eqref{eq:sigma_exact}, define
\begin{equation}
 a=\lambda d_0,\qquad e=\lambda\delta d,\qquad
 \widehat q_a=\Ad_{\Exp(-a^\wedge)}q_a,
 \label{eq:anchored_defs}
\end{equation}
and
\begin{equation}
 \chi_R(a,e)=\Log\!\left[\Exp(-a^\wedge)\Exp((a+e)^\wedge)\right]^\vee .
 \label{eq:chiR}
\end{equation}
The exact factorization is
\begin{equation}
 \sigma=\BCH(\widehat q_a,\chi_R).
 \label{eq:sigma_bch}
\end{equation}
Here $a$ is retained as a finite anchor. The interaction-picture field
\begin{equation}
 b_I(\tau)=\Ad_{\Exp(-\tau a^\wedge)}e
 \label{eq:interaction_field}
\end{equation}
generates the Magnus series for $\chi_R$ \cite{Magnus1954,Blanes2009}. Its first terms are
\begin{align}
 \Omega_1&=\int_0^1 b_I(\tau_1)\dd\tau_1=\dexp_{-a}e,
 \label{eq:magnus1}\\
 \Omega_2&=\frac12\int_0^1\int_0^{\tau_1}
 [b_I(\tau_1),b_I(\tau_2)]\dd\tau_2\dd\tau_1,
 \label{eq:magnus2}
\end{align}
with $\chi_R=\sum_{k\geq1}\Omega_k$. Each $\Omega_k$ is degree $k$ in $e$, while dependence on $a$ remains exact. The same principle is applied to $d$ by inverting its anchored increment map; Appendix~\ref{app:anchored} gives the construction.

Let $z_e=(q_a,r)$. A degree-$N$ card is defined by applying $\Pi_{\leq N}$ after the anchored increment and small--small BCH operations:
\begin{equation}
 d^{[N]}=d_0+\PiN{N}\delta d(z_e),\qquad
 \sigma^{[N]}=\PiN{N}\sigma(z_e,\lambda).
 \label{eq:finite_cards}
\end{equation}
For a complete total-degree expansion in a valid local chart, the residual is $O(\norm{z_e}^{N+1})$. Finite-order claims therefore refer to endpoint perturbation degree, not to powers of the reference curvature or reference element rotation.

\section{Euler--Poincar\'e Structure and Nonlinear Reduction}\label{sec:structure}

\subsection{Body fields, compatibility, and the EP equation}

For a section pose $G(s,t)$, define
\begin{equation}
 \xi=(G^{-1}\dot G)^\vee,\qquad
 \kappa=(G^{-1}G_{,s})^\vee,
 \label{eq:xi_kappa}
\end{equation}
and a left-trivialized variation $\delta G=G\zeta^\wedge$. The compatibility identities are
\begin{equation}
 \delta\xi=\dot\zeta+\ad_\xi\zeta,\qquad
 \delta\kappa=\zeta_{,s}+\ad_\kappa\zeta .
 \label{eq:compatibility}
\end{equation}
Consider the sectional Lagrangian density
\begin{equation}
 \ell(\xi,\kappa;s)=\frac12\xi\T\mathcal M(s)\xi-W(\kappa;s),
 \quad
 \mu=\frac{\partial\ell}{\partial\xi},\quad
 n=\frac{\partial W}{\partial\kappa}.
 \label{eq:lag_density}
\end{equation}
The coadjoint is defined by $\pair{\ad_\xi^*\mu}{\zeta}=\pair{\mu}{\ad_\xi\zeta}$. With an applied distributed wrench density $f$, the d'Alembert form of the Euler--Poincar\'e beam equation is
\begin{equation}
 -\dot\mu+\ad_\xi^*\mu+n_{,s}-\ad_\kappa^*n+f=0.
 \label{eq:EP}
\end{equation}
Equation~\eqref{eq:EP} is retained as the structural field equation because it fixes the signs of inertia, elastic stress, external work, and boundary ports. At the endpoints, the outward internal ports are $-n(0)$ and $n(L)$. Appendix~\ref{app:EP} derives Eq.~\eqref{eq:EP} and the endpoint signs from Hamilton's principle.

For the factorization in Eq.~\eqref{eq:component_factor}, the section twist takes the form
\begin{equation}
 \xi_i=Z_i(s,q_i)V_{B_i}+Y_i(s,q_i)\dot q_i,
 \label{eq:section_twist_reduced}
\end{equation}
where the same maps act on root and elastic virtual velocities. Pullback of the weak form gives root and elastic residual one-forms. For a fixed root, the inertial contribution is
\begin{equation}
 R_{T,i}=\int_0^{L_i}Y_i\T
 \left(\ad_{\xi_i}^*\mu_i-\dot\mu_i\right)\dd s,
 \qquad \xi_i=Y_i\dot q_i.
 \label{eq:reduced_inertia}
\end{equation}
With $R_{U,i}=\partial U_i/\partial q_i$ and external generalized force $Q_i$, the d'Alembert residual is
\begin{equation}
 R_{T,i}-R_{U,i}+Q_i=0.
 \label{eq:reduced_residual}
\end{equation}
An algebraic solver residual may use the overall negative of Eq.~\eqref{eq:reduced_residual}; the representation must be chosen once so that external-force and potential signs are not mixed.

\subsection{Component-local elastic potential}

Let the body strain and elastic strain be
\begin{equation}
 \kappa_i=(G_i^{-1}G_{i,s})^\vee,\qquad
 \epsilon_i=\kappa_i-\kappa_i^0,
 \label{eq:strain_def}
\end{equation}
with potential
\begin{equation}
 U_i=\int_0^{L_i}W_i(\epsilon_i;s)\dd s,
 \qquad
 W_i=\frac12\epsilon_i\T\mathcal K_i(s)\epsilon_i
 \quad\text{for a linear sectional law.}
 \label{eq:continuum_potential}
\end{equation}

\begin{proposition}[Root invariance of component strain energy]\label{prop:root_invariance}
Under Eq.~\eqref{eq:component_factor}, with $B_{i,s}=0$ and material constitutive data independent of upstream configuration,
\begin{equation}
 \kappa_i=\left(\overline G_i^{-1}\overline G_{i,s}\right)^\vee,
 \qquad U_i=U_i(q_i).
 \label{eq:local_potential_result}
\end{equation}
Consequently, the direct derivative of $U_i$ with respect to an upstream root or joint coordinate is zero.
\end{proposition}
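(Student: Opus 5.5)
The plan is a direct computation with the product rule on the factorization in Eq.~\eqref{eq:component_factor}, followed by a dependency argument for the energy integral. Since $B_i=B_i(t)$ has $B_{i,s}=0$, differentiating $G_i=B_i\overline G_i$ in $s$ gives $G_{i,s}=B_i\overline G_{i,s}$. Then
\begin{equation*}
 G_i^{-1}G_{i,s}=\overline G_i^{-1}B_i^{-1}B_i\overline G_{i,s}=\overline G_i^{-1}\overline G_{i,s},
\end{equation*}
and applying $(\cdot)^\vee$ yields the first identity in Eq.~\eqref{eq:local_potential_result}. I will note explicitly that this uses left-trivialization. The body strain in Eq.~\eqref{eq:xi_kappa} is invariant under left multiplication by any $s$-independent group element, and $B_i$ acts on the left. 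By contrast, $\xi_i$ picks up the term $\Ad_{\overline G_i^{-1}}V_{B_i}$, which is why Eq.~\eqref{eq:section_twist_reduced} contains $Z_iV_{B_i}$ while $\kappa_i$ contains no root term.

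Next I will fix the reference strain $\kappa_i^0$ in Eq.~\eqref{eq:strain_def}. It should be defined from the reference section field $G_i^0(s)$ in material coordinates. If the reference is written as $B_i\overline G_i(s,0)$, or as any pose left-multiplied by a constant, the same cancellation shows $\kappa_i^0=(\overline G_i(s,0)^{-1}\overline G_{i,s}(s,0))^\vee$, which is independent of the root pose. Hence $\epsilon_i(s)=\kappa_i-\kappa_i^0$ is a function of $(s,q_i)$ only.

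Hypothesis~1 and the proposition's assumption make $\mathcal K_i(s)$, or more generally $W_i(\cdot;s)$, depend only on the material coordinate and not on the upstream configuration. Therefore $W_i(\epsilon_i(s,q_i);s)$ depends only on $(s,q_i)$, and integrating over the fixed material interval $[0,L_i]$ gives $U_i=U_i(q_i)$. The domain is fixed because $L_i$ is a material length and is not configuration dependent.

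For the corollary, I take an upstream coordinate $\theta$, meaning a joint angle $\theta_j$ or a parent rigid coordinate entering $P_i$, $C_i$ or $B_i$ through Eq.~\eqref{eq:root_pose}. I treat $(\theta,q_i)$ as independent coordinates, so that $\partial U_i/\partial\theta|_{q_i}=0$ because $U_i$ has no $\theta$ argument. The variational form of the same statement is that a virtual root twist $\delta B_i=B_i\eta^\wedge$ with $\eta$ constant in $s$ induces $\zeta=\Ad_{\overline G_i^{-1}}\eta$. By Eq.~\eqref{eq:compatibility}, $\delta\kappa=\zeta_{,s}+\ad_\kappa\zeta$, and this vanishes identically because $(\Ad_{\overline G^{-1}}\eta)_{,s}=-\ad_\kappa\Ad_{\overline G^{-1}}\eta$. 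So the root-port contribution of the elastic one-form is zero.

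The main obstacle is not the algebra but stating the scope precisely. The claim concerns the direct partial derivative at fixed $q_i$. Indirect dependence, through constraints or condensation that couple $q_i$ to upstream coordinates, and all inertial and load dependence remain. I will also have to justify that $\kappa_i^0$ is root-independent, since that property must be imposed by defining the reference in material coordinates. I expect to handle it by the same left-invariance computation.
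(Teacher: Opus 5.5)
Your proposal is correct and follows the paper's proof: the left-multiplication cancellation $G_i^{-1}G_{i,s}=\overline G_i^{-1}\overline G_{i,s}$, followed by substitution into Eq.~\eqref{eq:continuum_potential}, which removes $B_i$ and every coordinate entering only through it. Two of your additional points are valid refinements that the paper leaves implicit: the root-independence of $\kappa_i^0$ when the reference is defined in material coordinates, and the variational check that a root variation induces $\zeta=\Ad_{\overline G_i^{-1}}\eta$ with $\delta\kappa=0$.
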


\begin{proof}
Because $G_i=B_i\overline G_i$ and $B_{i,s}=0$,
\begin{equation}
 G_i^{-1}G_{i,s}
 =\overline G_i^{-1}B_i^{-1}B_i\overline G_{i,s}
 =\overline G_i^{-1}\overline G_{i,s}.
 \label{eq:root_cancel}
\end{equation}
Substitution into Eq.~\eqref{eq:continuum_potential} removes $B_i$ and all coordinates that enter only through $B_i$.
\end{proof}

Proposition~\ref{prop:root_invariance} concerns the component's strain potential. Upstream coordinates still enter $\xi_i$ through Eq.~\eqref{eq:section_twist_reduced}, and external loads attached to the moving component pull back through the graph. Local stress resultants also supply endpoint ports within the component weak form. Joint springs, gravity, contact, or other configuration-dependent potentials are additional terms and need not satisfy Eq.~\eqref{eq:local_potential_result}.

\subsection{Mechanics-informed static condensation}

Let $x_i$ denote full constraint-compatible structural coordinates for a component. Retained and condensed bases $\Phi_i$ and $\Psi_i$ define
\begin{equation}
 x_i(q_i,y_i)=\Phi_iq_i+\Psi_i y_i .
 \label{eq:master_slave}
\end{equation}
The reference state is assumed to be an equilibrium, and the basis split is chosen so that the linear condensed response has already been absorbed into the retained chart. With $K_i=\nabla_x^2U_i(0)$, this condition is
\begin{equation}
 \Psi_i\T K_i\Phi_i=0.
 \label{eq:linear_decoupling}
\end{equation}
It implies $h(0)=0$ and $Dh(0)=0$. If Eq.~\eqref{eq:linear_decoupling} is not satisfied, a linear term $h_1(q)$ must be retained or the bases must be redefined.

The static manifold is determined by vanishing condensed force,
\begin{equation}
 g_y(q,y)=\Psi_i\T\nabla_x U_i(\Phi_iq+\Psi_i y)=0,
 \label{eq:static_manifold_eq}
\end{equation}
and is expanded as
\begin{equation}
 y=h(q)=h_2(q)+h_3(q)+\cdots+h_{N_h}(q),
 \label{eq:h_expansion}
\end{equation}
where $h_n$ is homogeneous of degree $n$. If
\begin{equation}
 K_{yy}=\Psi_i\T\nabla_x^2U_i(0)\Psi_i
 \label{eq:Kyy}
\end{equation}
is nonsingular on the condensed subspace, each coefficient satisfies a homological equation
\begin{equation}
 K_{yy}h_n=-\left[g_y\left(q,\sum_{k=2}^{n-1}h_k(q)\right)\right]_n,
 \qquad n\geq2,
 \label{eq:homological}
\end{equation}
where $[\cdot]_n$ extracts the degree-$n$ part. This construction is related to quadratic manifolds and higher-order invariant-manifold parameterizations used in nonlinear structural reduction \cite{Rutzmoser2017,Vizzaccaro2021,Haller2016}. Here it is used as a static equilibrium closure rather than as a claim of invariant dynamics.

The manifold also defines the reduced kinetic terms. Let
\begin{equation}
 \mathcal X(q)=\Phi q+\Psi h(q),\qquad
 J_h(q)=D\mathcal X(q)=\Phi+\Psi Dh(q).
 \label{eq:manifold_tangent}
\end{equation}
The variationally complete reduced Lagrangian is
\begin{equation}
 L_r(q,\dot q)=
 T\!\left(\mathcal X(q),J_h(q)\dot q\right)
 -U\!\left(\mathcal X(q)\right).
 \label{eq:reduced_lagrangian_manifold}
\end{equation}
Variation of Eq.~\eqref{eq:reduced_lagrangian_manifold} generates the configuration-dependent reduced inertia and its convective terms. Modifying only the potential while retaining the linear kinetic map would not be a complete manifold reduction. The comparisons in Section~\ref{sec:evidence} use the complete tangent pullback in Eq.~\eqref{eq:reduced_lagrangian_manifold}.

Two truncations used in Section~\ref{sec:evidence} are
\begin{align}
 x_{H_2}(q)&=\Phi q+\Psi h_2(q),
 \label{eq:H2}\\
 x_{H_2+H_3}(q)&=\Phi q+\Psi\{h_2(q)+h_3(q)\}.
 \label{eq:H23}
\end{align}
If $\delta d=\sum_{k\geq1}d_k$ is homogeneous in $q$, a complete fourth-order $H_2$ potential requires $d_1,d_2,d_3$, and a complete sixth-order $H_2+H_3$ potential requires terms through $d_5$. Appendix~\ref{app:condensation} shows the degree bookkeeping. This matching prevents selected higher-order products from being retained without their same-degree companions.

\section{Surface Geometry from an Assumed Material Attachment}\label{sec:attachment}

\subsection{Reference anchoring and current surface}

Let the reference section pose in component coordinates be $(R_{S0}^C(s),p_{S0}^C(s))$. The surface offset associated with Eq.~\eqref{eq:attachment_map} is obtained from the reference surface:
\begin{equation}
 \rho(s(u,v),\upsilon(u,v))=
 (R_{S0}^C(s))\T\left[X_0^C(u,v)-p_{S0}^C(s)\right].
 \label{eq:rho_reference}
\end{equation}
This definition enforces the reference identity
\begin{equation}
 X_0^C(u,v)=p_{S0}^C(s)+R_{S0}^C(s)\rho(s,\upsilon).
 \label{eq:reference_identity}
\end{equation}
For the current section pose $(R_S^I(s,q),p_S^I(s,q))$ generated from $\sigma$, the attached surface is
\begin{equation}
 X^I(u,v;q)=p_S^I(s,q)+R_S^I(s,q)\rho(s,\upsilon),
 \qquad (s,\upsilon)=\psi(u,v).
 \label{eq:current_surface}
\end{equation}
Equations~\eqref{eq:rho_reference}--\eqref{eq:current_surface} are the central structure--aerodynamic abstraction. Reference geometry establishes the material offsets once; the deformed section field transports those offsets thereafter.

\subsection{Tangents, metric, normal, and surface velocity}

Define the derivatives with respect to attachment coordinates by
\begin{align}
 X_{,s}&=p_{S,s}+R_{S,s}\rho+R_S\rho_{,s},
 \label{eq:Xs}\\
 X_{,\upsilon}&=R_S\rho_{,\upsilon}.
 \label{eq:Xupsilon}
\end{align}
The parameter-space chain rule gives
\begin{equation}
 X_{,u}=X_{,s}s_{,u}+X_{,\upsilon}\upsilon_{,u},\qquad
 X_{,v}=X_{,s}s_{,v}+X_{,\upsilon}\upsilon_{,v}.
 \label{eq:surface_chain_rule}
\end{equation}
The covariant basis, metric, dual basis, and oriented unit normal are
\begin{align}
 a_1&=X_{,u},\quad a_2=X_{,v},\quad
 g_{\alpha\beta}=a_\alpha\T a_\beta,
 \label{eq:metric}\\
 a^\alpha&=g^{\alpha\beta}a_\beta,\qquad
 n_s=\frac{X_{,u}\times X_{,v}}
 {\norm{X_{,u}\times X_{,v}}},
 \label{eq:surface_normal}
\end{align}
where $[g^{\alpha\beta}]=[g_{\alpha\beta}]^{-1}$. The prescribed $(u,v)$ orientation fixes the normal sign. In particular, replacing $X_{,u}\times X_{,v}$ with $X_{,s}\times X_{,\upsilon}$ can reverse the normal when the attachment chart has a negative parameter Jacobian.

At a fixed material surface coordinate $(u,v)$, $s$, $\upsilon$, and $\rho$ have no explicit time derivative. If the attached section has translational and angular velocities $(v_S,\omega_S)$, then
\begin{equation}
 \dot X=v_S+\omega_S\times (R_S\rho).
 \label{eq:surface_velocity}
\end{equation}
Using the section kinematics generated from $\sigma$, Eq.~\eqref{eq:surface_velocity} can be written
\begin{equation}
 \dot X=J_{X,B}(u,v;q)V_B+J_{X,q}(u,v;q)\dot q.
 \label{eq:point_jacobian}
\end{equation}
The corresponding variation has the same maps,
\begin{equation}
 \delta X=J_{X,B}\zeta_B+J_{X,q}\delta q.
 \label{eq:point_variation}
\end{equation}
Consequently, a point force $f_X$ contributes the one-forms
\begin{equation}
 Q_B=J_{X,B}\T f_X,\qquad Q_q=J_{X,q}\T f_X,
 \label{eq:force_pullback}
\end{equation}
and satisfies $f_X\T\delta X=Q_B\T\zeta_B+Q_q\T\delta q$ identically.

\begin{proposition}[Common-source compatibility]\label{prop:common_source}
Suppose $d^{[N]}$ and $\sigma^{[N]}$ use the same endpoint chart, and all quantities in Eqs.~\eqref{eq:current_surface}--\eqref{eq:force_pullback} are differentiated from $\sigma^{[N]}$ and $\mathcal A$. Then the rate map and force pullback are dual under Eq.~\eqref{eq:pairing}; the reference surface identity is exact; and the tangent and normal orientation is determined by the declared attachment parameterization.
\end{proposition}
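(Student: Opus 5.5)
The plan is to prove the three assertions separately. Each reduces to an identity that already holds before any truncation, so none of the steps uses the degree $N$ of the card. The hypothesis that $d^{[N]}$ and $\sigma^{[N]}$ share the endpoint chart is needed only so that the section field differentiated below is the one whose endpoints define the strain. It is not needed for the duality argument.

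\emph{Duality.} Fix a material coordinate $(u,v)$. Then $(s,\upsilon)=\psi(u,v)$ and $\rho(s,\upsilon)$ are constants. Equation~\eqref{eq:current_surface} therefore defines $X$ as a smooth function $F_{u,v}$ of the section pose alone. I will write the section pose as $G=B\,\overline G(s,q)$, with $\overline G$ generated from $\sigma^{[N]}$ through $G_e=G_e^0\Exp(\sigma^\wedge)$.

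\begin{itemize}
\item A curve in $(B,q)$ has time derivative $\dot X=DF_{u,v}\cdot(V_B,\dot q)$.
\item A variation of the same pair has $\delta X=DF_{u,v}\cdot(\zeta_B,\delta q)$.
\item Both are the same linear map, because the admissible virtual twists are declared to act through the same maps as the rates (Eq.~\eqref{eq:root_twist} and Eq.~\eqref{eq:section_twist_reduced}). This gives Eqs.~\eqref{eq:point_jacobian}--\eqref{eq:point_variation} with identical $J_{X,B}$ and $J_{X,q}$.
\end{itemize}

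If we then define $Q_B,Q_q$ by transposition as in Eq.~\eqref{eq:force_pullback}, the identity $f_X\T\delta X=Q_B\T\zeta_B+Q_q\T\delta q$ is just $f\T(Jz)=(J\T f)\T z$. Under Eq.~\eqref{eq:pairing} this is exactly the statement that the rate map and the force pullback are dual.

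The one point to check is that the Jacobian used for velocities is the derivative of the truncated $\sigma^{[N]}$ actually used for geometry, and not of the exact $\sigma$. Otherwise the velocity would be $J\dot q$ while the pullback used some $\tilde J\neq J$. This holds by hypothesis, since all quantities are differentiated from $\sigma^{[N]}$. In particular, $Y_\sigma=D_\sigma\sigma_{,q}$ in Eq.~\eqref{eq:sigma_operators} is evaluated with $\sigma=\sigma^{[N]}$.

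\emph{Reference identity.} At $q=0$, every endpoint perturbation vanishes, so $z_e=0$. Since $\sigma^{[N]}=\PiN{N}\sigma(z_e,\lambda)$ and $\sigma(0,\lambda)=0$ exactly (Eq.~\eqref{eq:sigma_exact} with $q_a=0$, $d=d_0$), the projection has zero constant term. Hence $G_e=G_e^0$ and $(R_S,p_S)=(R_{S0}^C,p_{S0}^C)$, up to the fixed root pose. Substituting this into Eq.~\eqref{eq:current_surface} and using the definition~\eqref{eq:rho_reference} gives Eq.~\eqref{eq:reference_identity} exactly.

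The key point is that the anchored construction keeps $d_0$ and $a=\lambda d_0$ exact and truncates only in $z_e$. No reference-curvature truncation error can therefore appear at $z_e=0$. I expect this to be the only step needing care: one must confirm that $\sigma^{[N]}(0)=0$ holds for the anchored card of Eq.~\eqref{eq:sigma_bch}. That follows because $\widehat q_a=0$ and $e=\lambda\,\delta d=0$ there, so $\chi_R=0$, since each Magnus term $\Omega_k$ has degree $k\ge1$ in $e$.

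\emph{Orientation.} By Eqs.~\eqref{eq:surface_chain_rule}--\eqref{eq:surface_normal}, the normal is built from $X_{,u}\times X_{,v}$. The chain rule gives
\[
X_{,u}\times X_{,v}=\det\!\left(\partial(s,\upsilon)/\partial(u,v)\right)\,X_{,s}\times X_{,\upsilon}.
\]
The sign of $n_s$ is therefore fixed by the declared $(u,v)$ orientation together with the fixed Jacobian sign of $\psi$. The orientation does not depend on $q$, because $\psi$ is time independent. Nondegeneracy (hypothesis~2) ensures the cross product is nonzero, so $n_s$ is well defined.
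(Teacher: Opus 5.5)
Your proof is correct and follows the paper's own argument. The paper differentiates the single attached-surface function $X(u,v;q)$, built from $\sigma^{[N]}$ and $\mathcal A$, to obtain one Jacobian for both rates and variations, transposes it for the force pullback, and reads the reference identity off the definition of $\rho$ in Eq.~\eqref{eq:rho_reference}. Your extra steps make explicit what the paper leaves to its surrounding text: the check that $\sigma^{[N]}$ vanishes at $z_e=0$, so the generated surface reproduces the reference surface (the paper records only the algebraic identity, and your stronger version tacitly assumes that the section pose in Eq.~\eqref{eq:rho_reference} is the element interpolation $G_e^0$), and the Jacobian-determinant formula for the normal orientation.
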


\begin{proof}
Equation~\eqref{eq:reference_identity} follows directly from Eq.~\eqref{eq:rho_reference}. Differentiation of one function $X(u,v;q)$ gives Eqs.~\eqref{eq:surface_chain_rule}, \eqref{eq:point_jacobian}, and \eqref{eq:point_variation}. Transposition of the latter produces Eq.~\eqref{eq:force_pullback}; hence virtual work is preserved by construction.
\end{proof}

\subsection{Discrete sites and their error forms}\label{sec:point_roles}

Aerodynamic and structural discretizations can use different sites even when they share Eq.~\eqref{eq:current_surface}. We distinguish four roles:
\begin{enumerate}
 \item $X_O$: the operator or collocation site used in the no-penetration equation;
 \item $X_P$: the pressure and area-quadrature site;
 \item $X_L$: the equivalent force application site;
 \item $X_{\widetilde L}$: a structural station proxy used to express the load in a section frame.
\end{enumerate}
These sites may coincide. Their separation describes a discretization policy and reveals the associated errors.

For a distributed panel load $t(X)$, define the exact resultant and moment about $O$ by
\begin{equation}
 f=\int_{A_p}t\dd A,\qquad
 m_O=\int_{A_p}(X-O)\times t\dd A.
 \label{eq:panel_resultants}
\end{equation}
A force applied at $X_L$ leaves a residual couple
\begin{equation}
 c_{\mathrm{res}}=m_O-(X_L-O)\times f.
 \label{eq:residual_couple}
\end{equation}
Omitting $c_{\mathrm{res}}$ is a force-equivalence approximation, distinct from pressure quadrature. If the force is subsequently referred to the station proxy $X_{\widetilde L}$, the change in moment is
\begin{equation}
 \Delta m_{\mathrm{stat}}=(X_L-X_{\widetilde L})\times f,\qquad
 \norm{\Delta m_{\mathrm{stat}}}
 \leq \norm{X_L-X_{\widetilde L}}\norm{f}.
 \label{eq:station_error}
\end{equation}
Finally, using a right-hand-side velocity at $X_R\neq X_O$ introduces the kinematic defect
\begin{equation}
 e_{\mathrm{rhs}}=n_O\T\left[\dot X(X_R)-\dot X(X_O)\right].
 \label{eq:rhs_proxy_error}
\end{equation}
Equations~\eqref{eq:residual_couple}--\eqref{eq:rhs_proxy_error} distinguish omitted couple, station moment, and boundary-condition proxy errors. They should not be combined into a single unnamed ``load-transfer error.''

\section{Aerodynamic Equations on the Generated Surface}\label{sec:aero}

\subsection{No penetration, circulation, and wake state}

Let $U_\infty$ be the incident velocity and let $u_\Gamma$ and $u_w$ denote velocities induced by bound and wake vorticity. The no-penetration equation at $X_O$ is
\begin{equation}
 n_O\T\left(U_\infty+u_\Gamma+u_w-\dot X_O\right)=0.
 \label{eq:no_penetration}
\end{equation}
For a structured vortex-ring discretization this becomes
\begin{equation}
 \mathcal A_\Gamma(X,n)\Gamma_b
\mathcal A_{\mathrm w}(X,n)x_w
=b_a(X,n,\dot X,U_\infty),
 \label{eq:AIC}
\end{equation}
with a specified bound-circulation orientation and trailing-edge Kutta condition. The equations do not require a classical quarter-chord vortex and three-quarter-chord collocation placement; the operator site is part of the declared panel scheme.

After linearization about a declared reference geometry and wake-convection law, an unsteady wake realization may be written in continuous form as
\begin{equation}
 \dot x_w=F_wx_w+G_w b_a+G_{\dot b}\dot b_a,
 \label{eq:wake_state}
\end{equation}
or in its time-discrete counterpart. Equation~\eqref{eq:wake_state} does not represent a general nonlinear free wake on a moving geometry. The state definition, linearization point, wake convection assumption, and Kutta map must be stated together because they determine which circulation histories are represented \cite{Murua2012,Maraniello2019}.

\subsection{Moving-surface pressure equation}

Let
\begin{equation}
 \Gamma(u,v,t)=\phi_{\mathrm{upper}}-\phi_{\mathrm{lower}},
 \qquad \Delta p=p_{\mathrm{lower}}-p_{\mathrm{upper}},
 \label{eq:jump_convention}
\end{equation}
and let $\rho_f$ be fluid density. Thus $\Gamma$ is a signed surface-potential jump; its panel values are identified with bound-circulation degrees of freedom under the declared vortex-ring orientation. On the generated surface, the surface gradient is
\begin{equation}
 \nabla_s\Gamma=a^\alpha\Gamma_{,\alpha}.
 \label{eq:surface_gradient}
\end{equation}
For a moving material surface coordinate, the Bernoulli pressure relation is represented by
\begin{equation}
 \pi:=\frac{\Delta p}{\rho_f}
 =\left.\dot\Gamma\right|_{u,v}
 +(\overline u-\dot X)\T\nabla_s\Gamma
 =\left.\dot\Gamma\right|_{u,v}+c^\alpha\Gamma_{,\alpha},
 \label{eq:bernoulli}
\end{equation}
where $\overline u$ is the selected mean tangential flow and
\begin{equation}
 c^\alpha=(\overline u-\dot X)\T a^\alpha .
 \label{eq:contravariant_velocity}
\end{equation}
Here $\overline u$ is the arithmetic mean of the limiting upper and lower tangential velocities under the potential-flow model, including the incident and selected induced contributions. Equation~\eqref{eq:bernoulli} is the difference of the two unsteady Bernoulli relations expressed at fixed material coordinates; it includes surface metric effects through $a^\alpha$. A structured discretization gives
\begin{equation}
 \bm\pi=\dot{\bm\Gamma}_b+
 \sum_{\alpha=1}^{2}\diag(\bm c^\alpha)D_\alpha\bm\Gamma_b,
 \label{eq:discrete_bernoulli}
\end{equation}
where $D_\alpha$ differentiates circulation in the chosen surface coordinates. Frozen-reference or orthogonal-grid forms are obtained by evaluating $a^\alpha$, $c^\alpha$, and $\dot X$ under the corresponding specialization. Such forms are partial linearizations of Eq.~\eqref{eq:bernoulli}; they do not represent its complete Fr\'echet derivative when circulation-induced velocities and moving geometry are both varied.

With the sign convention $\Delta p=p_{\mathrm{lower}}-p_{\mathrm{upper}}$, a pressure element performs the virtual work
\begin{equation}
 \delta W_a=\int_{A} \rho_f\pi n_s\T\delta X\dd A,
 \label{eq:aero_work}
\end{equation}
and generates the point force $\dd f=\rho_f\pi n_s\dd A$. Pullback through Eq.~\eqref{eq:point_variation} gives the structural one-form. This sequence uses the same attachment-generated $n_s$, $\dot X$, and $J_X$ in the aerodynamic boundary condition and in the load map.

\section{Graph Cotangent Assembly and Wake Reduction}\label{sec:graph}

\subsection{Local one-forms and graph pullback}

Let $z$ collect a base twist, joint rates, and component elastic rates. Forward kinematics from $z$ to a local section or surface point has a tangent map
\begin{equation}
 \delta x_\ell=J_\ell(z)\delta z.
 \label{eq:local_tangent}
\end{equation}
For a local one-form $\alpha_\ell$, graph assembly is the cotangent pullback
\begin{equation}
 \alpha_z=J_\ell(z)\T\alpha_\ell,
 \qquad
 \pair{\alpha_\ell}{\delta x_\ell}
 =\pair{\alpha_z}{\delta z}.
 \label{eq:graph_pullback}
\end{equation}
Summing Eq.~\eqref{eq:graph_pullback} over local inertial, elastic, and aerodynamic contributions yields the global d'Alembert one-form. In a tree, the backward recursion accumulates every descendant wrench at its parent joint. A proximal joint therefore receives the wrench of its entire downstream subtree, while a distal joint receives only its own subtree. Appendix~\ref{app:three_segment} gives the explicit three-segment formula.

\begin{proposition}[Semidiscrete power identity and internal-port cancellation]
\label{prop:power_identity}
Let $\dot z$ be an admissible generalized velocity and let every local load one-form be assembled by the transpose of the same tangent map used for local velocity. Then
\begin{equation}
 \dot z\T Q_{\mathrm{ext}}
 =\sum_\ell \pair{\alpha_\ell}{J_\ell\dot z}.
 \label{eq:semidiscrete_power}
\end{equation}
For the pressure load in Eq.~\eqref{eq:aero_work},
\begin{equation}
 \dot z\T Q_a
 =\int_A\rho_f\pi n_s\T\dot X\dd A.
 \label{eq:aero_power_identity}
\end{equation}
Equal-and-opposite wrenches at an internal graph port cancel from the sum. If the reduced mechanical Lagrangian has no explicit time dependence, a solution of the assembled d'Alembert equations therefore satisfies
\begin{equation}
 \frac{\dd}{\dd t}(T_r+U_r)
 =\dot z\T Q_{\mathrm{ext}},
 \label{eq:mechanical_energy_rate}
\end{equation}
up to the work of any explicitly declared dissipative or nonconservative terms.
\end{proposition}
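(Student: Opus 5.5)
The plan has three parts: the pullback identity, the port cancellation, and the energy balance. The first two follow directly from Eq.~\eqref{eq:graph_pullback}. The third substitutes the velocity into the assembled d'Alembert one-form and applies the usual Lagrangian energy argument.

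\textbf{Pullback identity.} Define $Q_{\mathrm{ext}}=\sum_\ell J_\ell(z)\T\alpha_\ell$, summed over the external local one-forms. The hypothesis is that the same $J_\ell$ acts on both $\delta z$ and $\dot z$; this holds because Eqs.~\eqref{eq:point_jacobian}--\eqref{eq:point_variation}, and the root recursion \eqref{eq:root_twist}, use identical linear maps for rates and virtual rates. Then $\dot z\T Q_{\mathrm{ext}}=\sum_\ell \alpha_\ell\T J_\ell\dot z$, which is Eq.~\eqref{eq:semidiscrete_power}. For the pressure load, I would insert the density $\dd f=\rho_f\pi n_s\dd A$ and the surface map $J_X=[J_{X,B}\;J_{X,q}]$ composed with the graph tangent to the root. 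Using $\dot X=J_X\dot z$ inside the area integral gives Eq.~\eqref{eq:aero_power_identity}. This is Proposition~\ref{prop:common_source} applied pointwise and then integrated. No interchange issue arises, because $J_X$ is linear in $\dot z$ and continuous on each patch.

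\textbf{Port cancellation.} At an internal port between parent $p$ and child $c$, the child exerts wrench $w$ on the parent and the parent exerts $-w$ on the child, both expressed in one port frame. Admissibility means the parent and child port twists coincide there: $V_p=V_c=:V_{\mathrm{port}}$, through Eq.~\eqref{eq:root_twist} with the joint-rate term counted on the child side. The joint-rate contribution is paired with the joint actuator or spring, which are declared separately. The two pairings are therefore $\pair{w}{V_{\mathrm{port}}}+\pair{-w}{V_{\mathrm{port}}}=0$. Because $\Ad$ transforms wrenches and twists dually, this cancellation is frame-independent. The internal-port terms therefore drop out of the sum in Eq.~\eqref{eq:semidiscrete_power}.

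\textbf{Energy balance.} I would take the assembled equations in d'Alembert form, $\frac{\dd}{\dd t}\partial_{\dot z}L_r-\partial_z L_r=Q_{\mathrm{ext}}$ with $L_r=T_r-U_r$, and contract with $\dot z$. This form follows from Eq.~\eqref{eq:reduced_residual} and Eq.~\eqref{eq:reduced_lagrangian_manifold}, with the Euler--Poincar\'e inertia of Eq.~\eqref{eq:reduced_inertia} as the reduced version of the same variation. The energy function $E=\dot z\T\partial_{\dot z}L_r-L_r$ satisfies $\dot E=\dot z\T Q_{\mathrm{ext}}-\partial_t L_r$, and $\partial_t L_r=0$ by hypothesis. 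Since $T_r$ is quadratic in $\dot z$ (the kinetic energy pulled back through a linear tangent map $J_h\dot q$ and $Z_iV_B+Y_i\dot q$), Euler's theorem gives $\dot z\T\partial_{\dot z}T_r=2T_r$, and hence $E=T_r+U_r$. This yields Eq.~\eqref{eq:mechanical_energy_rate}; declared dissipative terms add their own power.

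\textbf{Main obstacle.} The hard step is justifying that the left-trivialized Euler--Poincar\'e reduced inertia $R_{T,i}$ agrees, when paired with $\dot z$, with $-\dot z\T(\frac{\dd}{\dd t}\partial_{\dot z}T_r-\partial_z T_r)$, especially with a moving root and the manifold tangent. I would first check the contracted identity $\pair{\ad^*_\xi\mu}{\xi}=0$, which holds because $\ad_\xi\xi=0$. The inertial power then reduces to $-\pair{\dot\mu}{\xi}=-\frac{\dd}{\dd t}\tfrac12\xi\T\mathcal M\xi$ for $s$-wise constant $\mathcal M$. I would integrate this over $s$ and let the compatibility identities \eqref{eq:compatibility}, with $\zeta$ replaced by the velocity, supply the elastic term $\pair{n}{\dot\kappa}=\dot W$ after integration by parts. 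The boundary ports that appear in that integration by parts are exactly the internal terms that cancel in the previous step.
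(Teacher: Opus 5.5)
Your proposal is correct and follows the same route as the paper's proof. You define $Q_{\mathrm{ext}}=\sum_\ell J_\ell\T\alpha_\ell$ and pair it with $\dot z$, substitute $\dot X=J_X\dot z$ for the pressure load, cancel port powers using equal port twists and opposite wrenches, and contract the assembled d'Alembert equations with $\dot z$. Your energy-function and Euler-theorem step, together with $\pair{\ad_\xi^*\mu}{\xi}=0$ (which also handles the base-twist quasi-velocity, where the plain $\frac{\dd}{\dd t}\partial_{\dot z}L_r-\partial_zL_r$ form is not literal), only makes explicit what the paper's final one-line step leaves implicit.
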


\begin{proof}
Equation~\eqref{eq:semidiscrete_power} follows from $Q_{\mathrm{ext}}=\sum_\ell J_\ell\T\alpha_\ell$. Substitution of $\dot X=J_X\dot z$ gives Eq.~\eqref{eq:aero_power_identity}. At a connected port the two components have the same admissible port velocity and opposite wrenches, so their powers sum to zero. Multiplying the assembled Euler--Lagrange--d'Alembert equation by $\dot z$ gives Eq.~\eqref{eq:mechanical_energy_rate}.
\end{proof}

For an equivalent force $f$ and free couple $c_L$ at $X_L$, the wrench expressed in an attached section frame $S$ is
\begin{equation}
 w_S=
 \begin{bmatrix}
 R_S\T f\\
 R_S\T\{(X_L-X_S)\times f+c_L\}
 \end{bmatrix}.
 \label{eq:section_wrench}
\end{equation}
The lever arm in Eq.~\eqref{eq:section_wrench} and the graph adjoints in Eq.~\eqref{eq:graph_pullback} describe complementary operations: the former changes the wrench reference point within a component, whereas the latter pulls the resulting covector through generalized coordinates and joints.

\subsection{Mechanism-relevant aerodynamic outputs}

The aerodynamic output vector is selected after local pressure loads have been pulled to meaningful structural coordinates. A typical choice is
\begin{equation}
 y=S_Q Q_a,
 \label{eq:selected_outputs}
\end{equation}
where $Q_a$ contains component-root, elastic-mode, and joint generalized forces, and $S_Q$ selects the channels needed by the mechanism model. This ordering avoids reducing wake dynamics against panel pressure coordinates that may poorly represent joint or modal work.

For a linearized aerodynamic realization
\begin{align}
 \dot x_w&=A x_w+B b_a+B_{\dot b}\dot b_a,
 \label{eq:aero_ss1}\\
 y&=C x_w+D b_a+D_{\dot b}\dot b_a,
 \label{eq:aero_ss2}
\end{align}
balanced truncation or another projection-based method is applied only to $x_w$ \cite{Moore1981,Glover1984,Benner2015}. With right and left projectors $T_R,T_L$ satisfying $T_L\T T_R=I$, the reduced state matrices are
\begin{equation}
 A_r=T_L\T AT_R,\quad B_r=T_L\T B,\quad
 B_{\dot b,r}=T_L\T B_{\dot b},\quad C_r=CT_R.
 \label{eq:balanced_reduction}
\end{equation}
The direct channels remain
\begin{equation}
 D_r=D,\qquad D_{\dot b,r}=D_{\dot b}.
 \label{eq:direct_preservation}
\end{equation}
Preserving Eq.~\eqref{eq:direct_preservation} is important for moving-surface inputs because an input-rate term can carry instantaneous added-circulation or discretized Bernoulli effects. The selected output $C$ is a generalized-force map. A row norm or a ratio of its entries may be useful as a sensitivity indicator, yet it is not an observability Gramian. Observability terminology is reserved for a quantity formed from the state dynamics and output pair, such as the Gramian used in balanced truncation.

\section{Verification and Structural Evidence}\label{sec:evidence}

The evidence in this section addresses three questions: whether the anchored relative-log construction has the expected local order, whether the static manifold closes its condensed equilibrium equations, and whether the $H_2+H_3$ structural route changes the complete transient response in the expected direction relative to $H_2$. It is not a validation of a complete aircraft aeroelastic prediction.

\subsection{Exact-reference relative-log convergence}

Let $\overline z_e$ be a fixed normalized endpoint perturbation direction and define
\begin{equation}
 e_{d,N_d}(h)=
 \norm{\delta d^{[N_d]}(h\overline z_e)
 -\delta d_{\mathrm{exact}}(h\overline z_e)}_2 .
 \label{eq:d_error}
\end{equation}
The finite reference $d_0$ is held fixed while $h$ ranges from $0.12$ to $0.021$. Figure~\ref{fig:d_convergence} shows raw oracle errors for degrees $N_d=1,\ldots,5$. Least-squares slopes are $2.001$, $3.000$, $4.001$, $5.001$, and $6.000$, respectively. The observed $N_d+1$ behavior supports the perturbation-degree interpretation in Eq.~\eqref{eq:finite_cards}. Rigid-reference tests produced zero reported error, and directional-action discrepancies for the five degrees ranged from $6.6\times10^{-13}$ to $6.0\times10^{-12}$.

This is a local manufactured-oracle check along one stored perturbation direction. It does not by itself characterize chart-boundary robustness or establish uniform error constants over reference geometries. A broader verification campaign should sample multiple $d_0$, endpoint directions, and section locations and should report value and derivative convergence for both $d$ and $\sigma$.

\begin{figure}[htbp]
 \centering
 \includegraphics[width=0.84\textwidth]{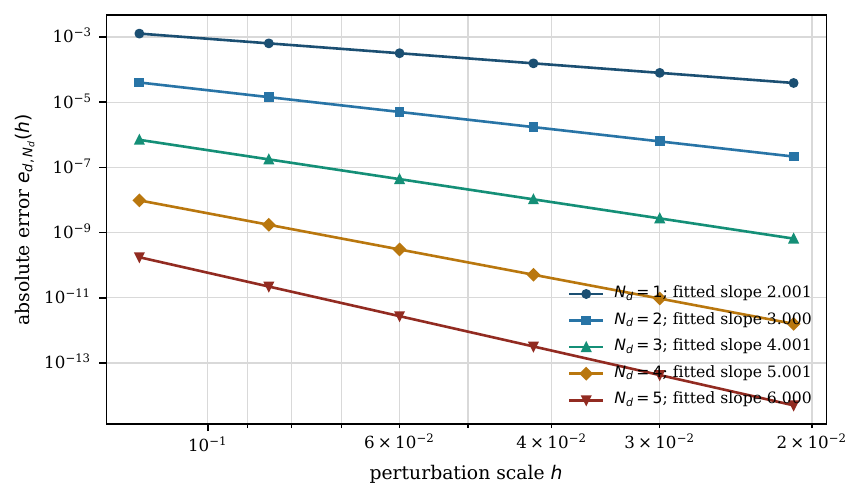}
 \caption{Exact-reference relative-log convergence. The plotted quantity is the absolute error $e_{d,N_d}(h)$ defined in Eq.~\eqref{eq:d_error}; the finite reference geometry is unchanged across perturbation scales.}
 \label{fig:d_convergence}
\end{figure}

\subsection{Static-manifold closure}

For the component reduction in Eqs.~\eqref{eq:static_manifold_eq}--\eqref{eq:homological}, define the degree-$n$ normalized closure
\begin{equation}
 c_n=\frac{\norm{[g_y(q,h_{\leq n})]_n}_2}
 {\max\left(\norm{K_{yy}h_n}_2,
 \norm{[g_y(q,h_{<n})]_n}_2,\epsilon_{\mathrm{mach}}\right)}.
 \label{eq:closure_metric}
\end{equation}
The stored coefficient residuals give $c_2=c_3=0$ to reported precision, with nonzero normalization denominators $1.1246\times10^{-2}$ and $5.6309\times10^{-5}$. The common fourth-order potential coefficients of the $H_2$ and $H_2+H_3$ constructions also agree to reported precision:
\begin{equation}
 \Pi_{\leq4}\overline U_{H_2+H_3}=\overline U_{H_2}.
 \label{eq:potential_consistency}
\end{equation}
These checks establish algebraic closure of the selected static-manifold orders and rule out a low-order mismatch between the two reduced potentials.

\subsection{Geometrically nonlinear cantilever comparison}

The structural example is a straight aluminum cantilever with properties listed in Table~\ref{tab:beam_setup}. The reference comparison uses a geometrically nonlinear Timoshenko beam model in COMSOL with the same beam dimensions, material properties, load history, and output definition. Three retained bending modes have reference frequencies $4.939$, $31.108$, and $87.813$~Hz. Fifteen condensed directions supply the static manifold. No damping is applied.

\begin{table}[htbp]
\caption{Cantilever configuration and explicitly defined loading parameter.}
\label{tab:beam_setup}
\centering
\begin{tabular}{@{}ll@{}}
\toprule
Quantity & Value \\
\midrule
Length $L$, width $b_s$, thickness $h_s$ & \SI{1.0}{m}, \SI{0.03}{m}, \SI{0.006}{m} \\
Young's modulus $E$, Poisson ratio $\nu$ & \SI{70}{GPa}, $0.33$ \\
Mass density $\rho_s$, shear factor $k_s$ & \SI{2700}{kg.m^{-3}}, $5/6$ \\
Structural discretization & 15 Timoshenko beam elements \\
Record length $T$ & \SI{2.5}{s} \\
Reference force $F_{\mathrm{ref}}$ & \SI{113.4}{N} \\
Load ratio & $\eta=F_0/F_{\mathrm{ref}}$ \\
Force history & $F(t)=F_0\sin(2\pi\,0.7f_1t)\min[t/(2/f_1),1]$ \\
Output $w(t)$ & tip displacement along the applied-force direction \\
\bottomrule
\end{tabular}
\end{table}

Four completed reference cases are available for $\eta\in\{0.10,0.15,0.20,0.25\}$. The $\eta=0.30$ COMSOL record is incomplete and is excluded from the reported means; its failure is not interpreted as a physical instability or as evidence for either reduced route. Resolving that case requires a separate reference-solver convergence study. For route $r$, two reported comparison measures are
\begin{equation}
 e_{2,r}=\frac{\norm{w_r-w_C}_2}{\norm{w_C}_2},\qquad
 e_{A,r}=\frac{|A_r-A_C|}{A_C},\qquad
 A_r=\max_{0\leq t\leq T}|w_r(t)|,
 \label{eq:beam_errors}
\end{equation}
where subscript $C$ denotes COMSOL and the discrete $2$-norm is taken on the common time grid. Thus $e_2$ measures the complete record, including phase and waveform, while $e_A$ measures peak magnitude only.

\begin{figure}[htbp]
 \centering
 \includegraphics[width=0.96\textwidth]{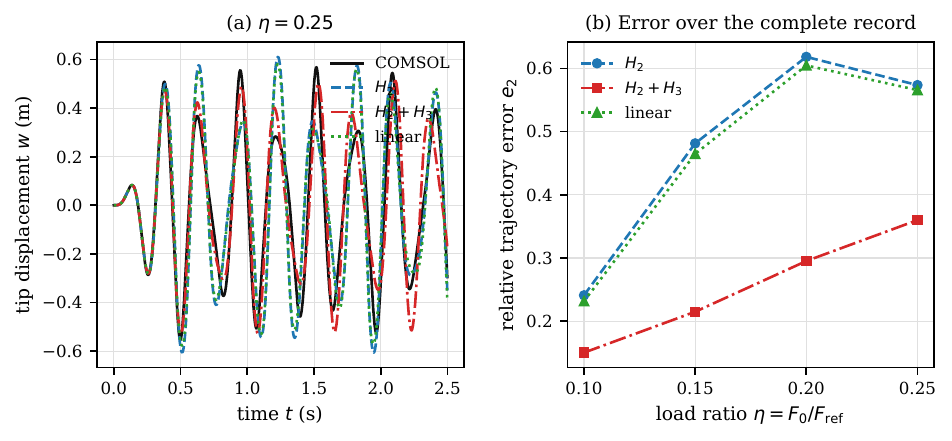}
 \caption{Cantilever comparison from the recorded time histories. (a) Tip displacement at $\eta=0.25$. (b) Relative trajectory error $e_2$ defined by Eq.~\eqref{eq:beam_errors}. The label $H_2+H_3$ denotes Eq.~\eqref{eq:H23}, not an additional retained vibration mode.}
 \label{fig:beam_comparison}
\end{figure}

\begin{table}[htbp]
\caption{Relative errors for the four completed cantilever cases, in percent.}
\label{tab:beam_errors}
\centering
\setlength{\tabcolsep}{5.2pt}
\begin{tabular}{@{}c rrr rrr@{}}
\toprule
& \multicolumn{3}{c}{$100e_2$} & \multicolumn{3}{c}{$100e_A$} \\
\cmidrule(lr){2-4}\cmidrule(l){5-7}
$\eta$ & $H_2$ & $H_2+H_3$ & Linear & $H_2$ & $H_2+H_3$ & Linear \\
\midrule
0.10 & 24.10 & 15.03 & 23.16 & 1.00 & 0.92 & 0.17 \\
0.15 & 48.14 & 21.50 & 46.47 & 3.79 & 1.76 & 1.85 \\
0.20 & 61.83 & 29.52 & 60.47 & 6.61 & 3.52 & 3.12 \\
0.25 & 57.36 & 36.00 & 56.57 & 7.50 & 6.80 & 2.11 \\
\midrule
Mean & 47.85 & 25.51 & 46.67 & 4.73 & 3.25 & 1.81 \\
\bottomrule
\end{tabular}
\end{table}

The $H_2+H_3$ route reduces $e_2$ relative to $H_2$ in every completed case and lowers the mean from $0.479$ to $0.255$, a $46.7\%$ reduction. It also lowers the mean peak error from $4.73\%$ to $3.25\%$. The linear model happens to give the smallest peak-amplitude error for these four cases, while its full-record error remains close to $H_2$. Reporting both measures avoids treating peak agreement as agreement of the transient trajectory. The remaining $15$--$36\%$ full-record errors of the cubic route are substantial. The comparison therefore demonstrates the influence and consistent dynamic use of the cubic static correction for this fixture; it does not establish high predictive accuracy. Broader claims require reference-solver convergence, additional geometries, load spectra, damping models, and experimental or independently verified data.

\FloatBarrier

\section{Error Taxonomy, Scope, and Extension}\label{sec:scope}

The formulation separates errors according to the map that creates them.
\begin{enumerate}
 \item \emph{Attachment-model error} measures whether a beam section and fixed offset adequately represent the actual material motion of the lifting surface. It is a modeling error in $\mathcal A$.
 \item \emph{Structural representation error} includes beam idealization, modal truncation, static-manifold truncation, and local chart truncation in $d$ and $\sigma$.
 \item \emph{Aerodynamic discretization error} includes panel resolution, wake approximation, collocation choice, circulation differentiation, and the potential-flow assumptions.
 \item \emph{Equivalent-load error} contains pressure quadrature and the residual couple $c_{\mathrm{res}}$ in Eq.~\eqref{eq:residual_couple}.
 \item \emph{Station-proxy error} is the additional moment in Eq.~\eqref{eq:station_error}; boundary-condition proxy error is Eq.~\eqref{eq:rhs_proxy_error}.
 \item \emph{Dynamic-reduction error} arises from truncating wake states or other dynamic coordinates after the mechanism-relevant output has been selected.
\end{enumerate}
The classification permits each contribution to be tested independently. A total generalized-force discrepancy can be decomposed schematically as
\begin{equation}
 \Delta Q=\Delta Q_{\mathcal A}+\Delta Q_{\mathrm{str}}
 +\Delta Q_{\mathrm{aero}}+\Delta Q_{\mathrm{eq}}
 +\Delta Q_{\mathrm{stat}}+\Delta Q_{\mathrm{red}},
 \label{eq:error_decomposition}
\end{equation}
although rigorous bounds depend on the regularity and stability constants of the selected models.

The fixed material attachment is well suited to skins, ribs, or lifting surfaces whose local motion is prescribed relative to a supporting beam section. A general nonmatching-mesh interpolation can be preferable when the surface motion cannot be represented by a section and offset, or when a three-dimensional structural discretization is authoritative. The common-source property in Proposition~\ref{prop:common_source} is a consistency statement under the present attachment hypothesis; it does not establish universal superiority over radial-basis or conservative interpolation schemes.

For continuous morphing or sliding attachment, let $\psi=\psi(u,v,\alpha(t))$ depend on a morphing parameter $\alpha$. Equation~\eqref{eq:surface_velocity} then acquires the transport term
\begin{equation}
 \dot X_{\mathrm{rel}}=X_{,s}s_{,\alpha}\dot\alpha
 +X_{,\upsilon}\upsilon_{,\alpha}\dot\alpha
 +R_S\rho_{,\alpha}\dot\alpha.
 \label{eq:sliding_attachment}
\end{equation}
The same term must appear in the virtual map and its cotangent pullback. Patch transitions also require compatible surface orientation and a rule for material-coordinate continuity. These extensions retain the compositional structure while enlarging the attachment state.

The present evidence is structural and interface level. It does not include a complete nonlinear aircraft time simulation, an aerodynamic mesh-convergence study, or experimental validation. Data-driven aeroelastic models may complement this framework when sufficiently rich observations are available \cite{Fonzi2020}; their states and outputs can be coupled through the same declared generalized-force channels.

\section{Conclusions}

A compositional aeroelastic formulation has been organized around the assumed attachment between a beam material section and an aerodynamic surface. The key structural distinction is that the exact relative log $d$ supplies element strain and potential energy, while the reference-anchored section coordinate $\sigma$ supplies the internal geometry used to generate the attached surface. The finite reference geometry is retained exactly in their local expansions. Because the component root pose is independent of the material coordinate, the elastic strain energy of component $i$ reduces to $U_i(q_i)$; upstream motion enters inertia, external work, and boundary-port assembly through the component graph.

The attachment map generates surface points, tangents, normals, velocities, and force Jacobians from a common section field. Aerodynamic boundary conditions and pressure one-forms can therefore share the same geometry and remain paired under virtual work. Collocation, pressure, load, and station-proxy sites were defined as separate discretization choices, yielding explicit residual-couple, station-moment, and right-hand-side proxy errors. Euler--Poincar\'e beam balance and graph cotangent pullback provide the structural and multibody backbone, while wake-state reduction preserves direct and input-rate channels.

The exact-reference relative-log data exhibit the expected $N_d+1$ convergence order. In the nonlinear cantilever comparison, adding the cubic static-manifold term reduces full-record displacement error in every completed load case relative to the quadratic manifold. These results support the consistency of the structural construction and provide quantitative checks for future coupled studies. The next validation stage should combine aerodynamic grid refinement, attachment-model comparisons, and experimental or independently verified aeroelastic responses across jointed and continuously morphing configurations.

\appendix
\renewcommand{\thesubsection}{\thesection.\arabic{subsection}}
\renewcommand{\thesubsubsection}{\thesubsection.\arabic{subsubsection}}
\titleformat{\subsection}
  {\normalsize\bfseries\singlespacing}
  {\thesubsection.\space}{0pt}{#1}[]
\titleformat{\subsubsection}
  {\normalsize\itshape\singlespacing}
  {\thesubsubsection.\space}{0pt}{#1}[]

\section{Weak Euler--Poincar\'e Derivation and Boundary Ports}\label{app:EP}

For a component, let
\begin{equation}
 \mathcal S=\int_{t_0}^{t_1}\int_0^L
 \left[\frac12\xi\T\mathcal M\xi-W(\kappa;s)\right]\dd s\dd t
 +\int_{t_0}^{t_1}\delta W_{\mathrm{ext}}\dd t.
 \label{eq:action_app}
\end{equation}
Using Eq.~\eqref{eq:compatibility}, the internal variation is
\begin{align}
 \delta\mathcal S_{\mathrm{int}}
 =&\int\!\!\int
 \left\{\mu\T(\dot\zeta+\ad_\xi\zeta)
 -n\T(\zeta_{,s}+\ad_\kappa\zeta)\right\}\dd s\dd t.
 \label{eq:weak_before_parts}
\end{align}
Integration by parts in time and material coordinate gives
\begin{align}
 \delta\mathcal S_{\mathrm{int}}
 =&\int\!\!\int
 \left[-\dot\mu+\ad_\xi^*\mu+n_{,s}-\ad_\kappa^*n\right]\T\zeta\dd s\dd t
 \nonumber\\
 &+\int_{t_0}^{t_1}\left[n(0)\T\zeta(0)-n(L)\T\zeta(L)\right]\dd t,
 \label{eq:weak_after_parts}
\end{align}
where $\zeta(t_0)=\zeta(t_1)=0$. Let the external virtual work be
\begin{equation}
 \delta W_{\mathrm{ext}}=\int_0^Lf\T\zeta\dd s
 +w_0\T\zeta(0)+w_L\T\zeta(L).
 \label{eq:external_work_app}
\end{equation}
Arbitrariness of the interior variation gives Eq.~\eqref{eq:EP}; free endpoint variations give
\begin{equation}
 w_0=-n(0),\qquad w_L=n(L).
 \label{eq:boundary_ports}
\end{equation}
These are external wrenches applied to the component. At a connected internal graph port, equal and opposite endpoint work is enforced by the cotangent assembly, and the paired terms cancel in the global virtual work.

For $G=B\overline G$, a left-trivialized section variation can be decomposed as
\begin{equation}
 \zeta=Z(s,q)\zeta_B+Y(s,q)\delta q.
 \label{eq:zeta_ZY}
\end{equation}
Substitution into Eq.~\eqref{eq:weak_after_parts} and transposition gives the root and elastic one-forms. If $U=U(q)$ as in Proposition~\ref{prop:root_invariance}, its direct root-potential one-form vanishes. This does not remove the root inertial one-form or the external and endpoint wrenches.

\section{Reference-Anchored Expansions of \texorpdfstring{$d$ and $\sigma$}{d and sigma}}\label{app:anchored}

\subsection{Exact-reference construction for \texorpdfstring{$d$}{d}}

In the endpoint-relative chart of Eq.~\eqref{eq:endpoint_relative_chart}, define
\begin{equation}
 \chi(q_a,r)=\Log\!\left[
 \Exp(-(A_0q_a)^\wedge)\Exp((A_0q_a+r)^\wedge)
 \right]^\vee .
 \label{eq:chi_d}
\end{equation}
The adjoint identity
\begin{equation}
 \Exp(d_0^\wedge)\Exp(-(A_0q_a)^\wedge)
 =\Exp(-q_a^\wedge)\Exp(d_0^\wedge)
 \label{eq:adjoint_identity_d}
\end{equation}
implies
\begin{equation}
 \Exp(d^\wedge)=\Exp(d_0^\wedge)\Exp(\chi^\wedge).
 \label{eq:d_factorized}
\end{equation}
Introduce the exact-reference increment map
\begin{equation}
 F_{d_0}(e)=\Log\!\left[
 \Exp(-d_0^\wedge)\Exp((d_0+e)^\wedge)
 \right]^\vee .
 \label{eq:F_d0}
\end{equation}
Then $\delta d$ is the local solution of
\begin{equation}
 F_{d_0}(\delta d)=\chi.
 \label{eq:F_inverse}
\end{equation}
Expanding $F_{d_0}$ homogeneously in $e$ gives
\begin{equation}
 F_{d_0}(e)=\sum_{n\geq1}F^{[n]}_{d_0}[e,\ldots,e],
 \qquad F^{[1]}_{d_0}=\dexp_{-d_0}.
 \label{eq:F_series}
\end{equation}
Assuming $F^{[1]}_{d_0}$ is nonsingular in the selected chart, homogeneous series reversion produces
\begin{align}
 \delta d_1&=(F^{[1]}_{d_0})^{-1}\chi_1,
 \label{eq:d_reversion1}\\
 \delta d_n&=-(F^{[1]}_{d_0})^{-1}
 \left[F_{d_0}\left(\sum_{k=1}^{n-1}\delta d_k\right)-\chi\right]_n,
 \quad n\geq2.
 \label{eq:d_reversionn}
\end{align}
The interaction-picture Magnus expansion supplies each $F^{[n]}_{d_0}$ while retaining $d_0$ exactly. The final $\Pi_{\leq N_d}$ is applied in $(q_a,r)$ after reversion.

\subsection{Section-coordinate expansion}

For the section coordinate, Eqs.~\eqref{eq:anchored_defs}--\eqref{eq:sigma_bch} already give the exact split
\begin{equation}
 \sigma=\BCH\!\left(
 \Ad_{\Exp(-\lambda d_0^\wedge)}q_a,
 \chi_R(\lambda d_0,\lambda\delta d)
 \right).
 \label{eq:sigma_app_exact}
\end{equation}
A transparent truncation specification contains three degrees:
\begin{enumerate}
 \item the perturbation degree retained in $\chi_R$;
 \item the total or per-slot degree retained in the small--small BCH of Eq.~\eqref{eq:sigma_app_exact};
 \item the final total degree in the endpoint chart after substitution of $\delta d(q_a,r)$.
\end{enumerate}
For a complete degree-two section card, the small--small terms are
\begin{equation}
 \sigma^{[2]}=\Pi_{\leq2}\left(
 \widehat q_a+\chi_R+\frac12[\widehat q_a,\chi_R]\right).
 \label{eq:sigma_degree2}
\end{equation}
The final projection is essential: a term with one appearance in each BCH slot can still have endpoint degree three if the right slot contains a quadratic anchored increment.

\subsection{Chart validity and differentiation}

The logarithms in Eqs.~\eqref{eq:d_exact} and \eqref{eq:sigma_exact} require a local branch that is continuous around the reference configuration. The reference anchor may be finite, but endpoint perturbations must remain inside that chart. Value convergence alone is insufficient for dynamics. Directional actions should also satisfy
\begin{equation}
 Dd^{[N]}(z)[\delta z]
 =Dd_{\mathrm{exact}}(z)[\delta z]+O(\norm{z}^{N}),
 \label{eq:action_order}
\end{equation}
with the appropriate local scaling. Endpoint identities for $\sigma$ provide additional checks at $\lambda=0$ and $1$.

\section{Static-Manifold and Potential Degree Bookkeeping}\label{app:condensation}

Let the exact reduced relative deformation after substitution of the manifold be
\begin{equation}
 \delta d(q)=d_1(q)+d_2(q)+d_3(q)+\cdots,
 \label{eq:d_homogeneous}
\end{equation}
where $d_k$ is degree $k$. With $U=\tfrac12\delta d\T K_d\delta d$, the degree-$m$ potential is
\begin{equation}
 U_m=\frac12\sum_{j+k=m}d_j\T K_d d_k.
 \label{eq:Um}
\end{equation}
In particular,
\begin{align}
 U_2&=\frac12d_1\T K_d d_1,
 \label{eq:U2}\\
 U_3&=d_1\T K_d d_2,
 \label{eq:U3}\\
 U_4&=d_1\T K_d d_3+\frac12d_2\T K_d d_2,
 \label{eq:U4}\\
 U_6&=d_1\T K_d d_5+d_2\T K_d d_4+
 \frac12d_3\T K_d d_3.
 \label{eq:U6}
\end{align}
Thus, a fourth-order potential requires $d$ through degree three, and a sixth-order potential requires $d$ through degree five. Terms $h_2$ and $h_3$ modify the endpoint map before $d$ is reconstructed. A consistent low-dimensional procedure is:
\begin{enumerate}
 \item use the master--slave relative deformation through degree three to solve Eqs.~\eqref{eq:homological} for $h_2$ and $h_3$;
 \item substitute $x_{H_2}$ into the endpoint chart, reconstruct $d$ through degree three, and retain $\overline U_{H_2}$ through degree four;
 \item substitute $x_{H_2+H_3}$, reconstruct $d$ through degree five, and retain $\overline U_{H_2+H_3}$ through degree six;
 \item verify the common-order identity in Eq.~\eqref{eq:potential_consistency} and the closures in Eq.~\eqref{eq:closure_metric}.
\end{enumerate}
This order matching treats the nonlinear manifold and geometric relative-log expansion as coupled parts of one reduced potential.

\section{Attachment Differentials and Equivalent-Point Errors}\label{app:attachment}

\subsection{Normal variation}

Let $c=X_{,u}\times X_{,v}$ and $n_s=c/\norm c$. A variation of the normal is
\begin{equation}
 \delta n_s=\frac{1}{\norm c}(I-n_sn_s\T)\delta c,
 \label{eq:normal_variation}
\end{equation}
with
\begin{equation}
 \delta c=\delta X_{,u}\times X_{,v}+X_{,u}\times\delta X_{,v}.
 \label{eq:cross_variation}
\end{equation}
Because $\delta X$ is generated from Eq.~\eqref{eq:current_surface}, these derivatives include both section rotation and section translation. The area variation is $\delta(\dd A)=n_s\T\delta c\,\dd u\dd v$.

\subsection{Virtual work of a force and couple}

Let $r=X_L-X_S$. For a section variation $(\delta x_S,\delta\phi_S)$,
\begin{equation}
 \delta X_L=\delta x_S+\delta\phi_S\times r+\delta X_{L,\mathrm{el}},
 \label{eq:point_variation_explicit}
\end{equation}
where $\delta X_{L,\mathrm{el}}$ denotes any additional elastic-offset variation already contained in $J_{X,q}$. The work of $(f,c_L)$ is
\begin{align}
 f\T\delta X_L+c_L\T\delta\phi_S
 =&\ f\T\delta x_S+
 \{r\times f+c_L\}\T\delta\phi_S
 +f\T\delta X_{L,\mathrm{el}}.
 \label{eq:explicit_work}
\end{align}
The first two terms yield Eq.~\eqref{eq:section_wrench}; the last is pulled to elastic coordinates by the point Jacobian. This derivation also shows why moving a force to a station without updating its moment creates Eq.~\eqref{eq:station_error}.

\subsection{Choice of an equivalent point}

Given $f\neq0$ and a target moment $m_O$, a point force can reproduce only the component of $m_O$ perpendicular to $f$. One minimum-norm point offset is
\begin{equation}
 r_L=\frac{f\times m_O}{\norm f^2},
 \label{eq:minimum_norm_point}
\end{equation}
which gives
\begin{equation}
 r_L\times f=m_O-\frac{f f\T}{\norm f^2}m_O.
 \label{eq:point_moment_projection}
\end{equation}
The parallel component
\begin{equation}
 c_{\parallel}=\frac{f f\T}{\norm f^2}m_O
 \label{eq:parallel_couple}
\end{equation}
cannot be represented by relocating the force and must remain as a free couple if moment equivalence is required. For $f=0$, a nonzero panel moment is purely a couple. These relations define the approximation made by a force-only equivalent point.

\section{Three-Segment Folding-Wing Specialization and Mathematical Procedure}\label{app:three_segment}

\subsection{Three-segment graph}

Consider three flexible segments connected serially by two morphing joints. Their roots are
\begin{align}
 B_1&=B_0H_1^{\mathrm{inst}},
 \label{eq:B1_three}\\
 B_2&=P_1C_2(\theta_2)H_2^{\mathrm{inst}},
 \label{eq:B2_three}\\
 B_3&=P_2C_3(\theta_3)H_3^{\mathrm{inst}},
 \label{eq:B3_three}
\end{align}
where $P_i$ is the distal port of segment $i$. Each component has its own $q_i$, $d_i$, $\sigma_i$, $U_i(q_i)$, and attachment map $\mathcal A_i$. No term $U_3(q_3)$ is differentiated directly with respect to $\theta_2$ or $q_2$. Segment-3 inertia and applied aerodynamic loading nevertheless contribute to those upstream coordinates through Eq.~\eqref{eq:graph_pullback}.

Let $w_i^{\mathrm{loc}}$ be the net local root wrench of segment $i$ after its distributed one-forms are pulled to the root. The subtree wrenches satisfy
\begin{align}
 W_3&=w_3^{\mathrm{loc}},
 \label{eq:W3}\\
 W_2&=w_2^{\mathrm{loc}}+\Ad_{H_{2\to3}}^{-\mathsf T}W_3,
 \label{eq:W2}\\
 W_1&=w_1^{\mathrm{loc}}+\Ad_{H_{1\to2}}^{-\mathsf T}W_2,
 \label{eq:W1}
\end{align}
where $H_{i\to j}$ maps from the child-root frame to the parent port frame under the convention of Eq.~\eqref{eq:pairing}. The generalized joint forces are
\begin{equation}
 Q_{\theta_3}=J_3\T W_3,\qquad
 Q_{\theta_2}=J_2\T W_2.
 \label{eq:joint_forces_three}
\end{equation}
The proximal joint accumulates both downstream subtrees. This result follows from topology and covector pullback and does not require a separately prescribed moment rule.

\subsection{Mathematical construction procedure}

The following pseudocode summarizes the dependency order without prescribing a software architecture.
\begin{enumerate}
 \item \emph{Reference definition.} Specify component graphs, beam reference frames and constitutive data, reference lifting surfaces, oriented parameter cells, and attachment maps.
 \item \emph{Element geometry.} For each element, form the exact reference $d_0$, an anchored finite-order $d$, the potential generated by $\delta d$, and a compatible anchored $\sigma$ on required section points.
 \item \emph{Component reduction.} Select $\Phi$ and $\Psi$, solve Eq.~\eqref{eq:homological} to the requested static-manifold degree, reconstruct $d$ at the potential order required by Appendix~\ref{app:condensation}, and verify closure.
 \item \emph{Attached surface.} Derive $\rho$ from Eq.~\eqref{eq:rho_reference}; generate $X$, tangents, normals, velocities, and point Jacobians from Eqs.~\eqref{eq:current_surface}--\eqref{eq:point_jacobian}.
 \item \emph{Aerodynamic one-forms.} Apply Eqs.~\eqref{eq:no_penetration} and \eqref{eq:bernoulli}; integrate pressure loads; retain any residual couples required by Eq.~\eqref{eq:residual_couple}; pull loads through the point and section Jacobians.
 \item \emph{Graph assembly.} Accumulate local one-forms by Eq.~\eqref{eq:graph_pullback}; select mechanism-relevant generalized-force outputs.
 \item \emph{Dynamic reduction.} Reduce wake states using Eqs.~\eqref{eq:balanced_reduction}--\eqref{eq:direct_preservation} while preserving direct and input-rate channels.
 \item \emph{Verification.} Check reference attachment, endpoint identities, normal orientation, value and action convergence, virtual-work duality, static-manifold closure, equivalent-point residuals, and reduced-state diagnostics.
\end{enumerate}

\section*{Data Availability}

The numerical records underlying Figs.~\ref{fig:d_convergence} and \ref{fig:beam_comparison} are available from the corresponding author upon reasonable request.

\section*{Acknowledgments}

The authors thank colleagues in the School of Aeronautic Science and Engineering at Beihang University for discussions on geometrically exact beams, morphing mechanisms, and potential-flow aeroelasticity.

\bibliography{references}

\end{document}